\newtheorem{theorem}{Theorem}
\newtheorem{proposition}{Proposition}
\theoremstyle{definition}
\newtheorem{remark}{Remark}
\newtheorem{definition}{Definition}
\DeclareMathOperator{\E}{E}
\DeclareMathOperator{\I}{I}
\renewcommand{\tilde}{\widetilde}
\renewcommand{\epsilon}{\varepsilon}
\renewcommand{\P}{\mathrm{P}}
\newcommand{\F}{\mathcal{F}}
\newcommand{\FF}{\mathbb{F}}
\newcommand{\R}{\mathbb{R}}
\newcommand{\qc}[1]{\langle{#1}\rangle}
\newcommand{\qcc}[2]{\langle{#1,#2}\rangle}
\newcommand{\e}{\mathcal{E}}
\renewcommand*\bar[1]{\hbox{\vbox{\hrule height 0.5pt \kern0.5ex \hbox{\kern-0.1em \ensuremath{#1}\kern-0.1em}}}}
\title{Optimal growth strategies for a representative agent in~a~continuous-time asset market}
\author{Mikhail Zhitlukhin\thanks{Steklov Mathematical Institute of RAS, Moscow. Email: mikhailzh@mi-ras.ru.}}
\begin{document}
\maketitle

\begin{abstract}
We propose a multi-agent model of an asset market and study conditions that guarantee that the strategy of an individual agent cannot outperform the market. The model assumes a mean-field approximation of the market by considering an infinite number of infinitesimal agents who use the same strategy and another infinitesimal agent with a different strategy who tries to outperform the market. We show that the optimal strategy for the market agents is to split their investment budgets among the assets proportionally to their discounted expected relative dividend intensities. 
\end{abstract}

\section{Introduction}
The goal of this paper is to study growth optimal strategies in a multi-agent model of an asset market with endogenous asset prices. Recall that the standard notion of growth optimality in a market with exogenous prices can be formulated as follows: if $V_t$ denotes the wealth process of a growth optimal strategy and $W_t$ denotes the wealth process of any other strategy, then $W_t/V_t$ is a supermartingale. By exogenous asset prices we mean that they are specified as some stochastic processes which do not depend on the strategies used by the agents. In this paper, asset prices depend on agents' strategies. 

We consider a model which focuses on a mean-field representation of a market. There are $N$ dividend-paying assets and an infinite number of infinitesimal agents who use the same investment strategies and have the same wealth $V_t$. We call them representative agents. The strategy of these agents determines the asset prices through the short-run equilibrium of supply and demand. There are also other agents who may use other strategies, but they are small and do not influence the asset prices. If the wealth process of such an agent is $W_t$, we characterize the strategy of the representative agents such that $W_t/V_t$ is a supermartingale for any strategy of a small agent. 

This problem can be interpreted as the question to determine when it is not possible to ``beat'' the market. In the academic literature and among practitioners, it is generally believed  that an individual investor cannot outperform the market portfolio in a sufficiently long time frame. However, the market itself consists of individual investors and the market portfolio essentially represents the weighted strategy of all investors. Hence one can ask what the strategies of these investors should be, if they cannot be outperformed when considered as a whole. This paper provides a partial answer to this question.

There are two main results of the paper. The first result gives an explicit construction of a growth optimal strategy for representative agents and prove the uniqueness of this strategy. It turns out that representative agents must allocate their wealth among the assets proportionally to the total discounted expected relative dividend intensities of the assets. The second result shows that if an individual agent uses a strategy which is close to the growth optimal strategy, then he or she survives in the market in the sense that the ratio $W_t/V_t$ does not vanish as $t\to\infty$. If an individual agent uses a strategy which differs too much from the growth optimal strategy, then $W_t/V_t$ vanishes. We give a precise meaning of such closeness. Consequently, the growth optimal strategy, if used by the representative agents, forces other agents to use it or a close strategy as well, if they want to achieve investment performance not worse than that of the strategy which copies the market. 

Let us briefly mention how this paper is related to other results in the literature. The notion of growth optimality (for markets with exogenous prices) originated in the works of \cite{Kelly56} and \cite{Breiman61}. The growth optimality of the log-optimal strategy for a general discrete-time model was proved by \citet{AlgoetCover88}; a review of other results in discrete time can be found in, e.g., \citet[Chapter~16]{CoverThomas12} or \cite{HakanssonZiemba95}. For results in continuous time and a connection of growth optimal strategies (numeraire portfolios) with absence of arbitrage, see \cite{PlatenHeath06,KaratzasKardaras07}. 

The property of a market that it cannot be beaten by a specific agent is known to be related to the property of market diversity. The latter means that all the capital of the market does not become eventually accumulated in one asset, see, e.g., \cite{Kardaras08,FernholzKaratzas+05}. The results obtained in our paper bear resemblance to the results of \cite{Kardaras08}, but the model we consider is somewhat different, and, in particular, it allows to identify a single optimal strategy.

There is also a strand of papers in evolutionary finance which study the property of survival of investment strategies in markets where agents use arbitrary strategies and are not necessarily small. One of the main results in this direction consists of that the strategy that splits its investment budget among assets proportionally to their expected dividends survives in the market, see, e.g., \cite{AmirEvstigneev+11,AmirEvstigneev+20,EvstigneevHens+20}. In the framework of general equilibrium,  results on market selection of investment strategies were obtained, among others, by \cite{BlumeEasley06,Borovicka20,Sandroni00,Yan08}. A recent survey of literature in evolutionary finance can be found in \citet{Holtfort19}.

The paper is organized as follows. In Section~\ref{sec-model}, we formulate the model and introduce the notion of growth optimality of a strategy of a representative agent. In Section~\ref{sec-main}, we prove the two above-mentioned results. Section~\ref{sec-example} contains a simple example which illustrates the main results for a particular case of the general model where the optimal strategy and the notion of closeness have especially clear interpretation.

\section{Model}
\label{sec-model}
We consider a model of an asset market with an infinite number of agents, in which the actions of an individual agent do not affect market characteristics like asset prices or wealth distribution. The key assumption will be that the majority of agents use the same trading strategy and it is this strategy that determines the market characteristics. A possible interpretation of this assumption is that we consider a mean-filed approximation of a market consisting of diverse agents.

Let $(\Omega,\F, \P)$ be a probability space with a filtration $\FF=(\F_t)_{t\ge0}$ on which all random variables will be defined. We assume that it satisfies the usual assumptions (the filtration is right-continuous, the $\sigma$-algebra $\F$ is $\P$-complete and $\F_0$ contains all $\P$-null sets). We also assume that the filtration has the property that 
\begin{equation}
\label{cont-modification}
\text{any martingale on $(\Omega,\F,\FF, \P)$ has a continuous modification.}
\end{equation}
Recall that this is so if, for example, $\FF$ is the augmented filtration generated by a Brownian motion (single- or multi-dimensional). 

There are $N$ assets in the market which pay dividends with intensities $X_t^n$ per one share per unit of time, $n=1,\ldots,N$. The dividends are paid in some perishable good and must be consumed by the agents immediately; there is no possibility to store the good. The intensity processes $X_t^n$ are assumed to be non-negative continuous semimartingales satisfying the following conditions for any $t\ge 0$:
\begin{align}
\label{X-assumption-1}
&\bar X_t := \sum_{n=1}^N X_t^n > 0\ \text{a.s.},\\
\label{X-assumption-2}
&\P(\exists\, s\ge t : X_s^n>0 \mid \F_t) >0\ \text{for any}\ n=1,\ldots,N.
\end{align}
These are non-degeneracy conditions. The first one states that the total dividend intensity $\bar X_t$ is always  non-zero; the second one means that, for each asset, there is no moment of time $t$ after which it stops paying dividends with probability 1 conditionally on $\F_t$. 

The prices of the assets in the model are specified by strictly positive continuous semimartingales $S_t^n$ which will be defined endogenously based on the strategies of the agents.

An agent trading in this market is identified with his/her wealth process $W_t$, trading strategy $\lambda_t=(\lambda_t^1,\ldots,\lambda_t^N)$, and consumption rate $\rho>0$ (for simplicity, the consumption rate is assumed to be constant). The trading strategy specifies in what proportions this agent invests his/her capital in the assets. We assume $\lambda_t$ is a continuous semimartingale with components $\lambda_t^n\ge 0$ and $\sum_{n=1}^N \lambda_t^n = 1$ (short sales are not allowed). The evolution of wealth of an agent is described by the equation
\begin{equation}
\label{wealth}
d W_t = \sum_{n=1}^N \frac{\lambda_t^{n}W_t}{S_t^n} (d S_t^n + X_t^n dt) 
  - \rho W_t dt,
\qquad W_0 = w_0 > 0.
\end{equation}

Our key assumption will be that, informally speaking, the market consists of an infinite number of identical infinitesimal agents, further called \textit{representative agents}, who use the same strategy and have the same wealth process. These agent occupy the total proportion 1 in the set of agents and their actions define the evolution of asset prices. At the same time, there may be other infinitesimal agents (of total proportion 0), further called \emph{small agents}, whose actions do not have any effect on the asset prices. We will assume that the consumption rate $\rho$ is the same for all agents, since our goal will be to find a strategy which generates more wealth; otherwise a strategy with a smaller consumption rate will have an advantage.

Let $V_t$ and $\mu_t=(\mu_t^1,\ldots,\mu_t^N)$ be, respectively, the wealth process and the strategy of each representative agent. Then we postulate that the asset prices $S_t^n$ are defined by the relation
\begin{equation}
\label{prices}
S_t^n = s^n\mu_t^n V_t,
\end{equation}
where $s^n>0$ are some constants. Equation \eqref{prices} has the following interpretation. Assume that the supply volume (the number of shares) of asset $n$ is equal to $s^n M$, where $M\to\infty$ is the number of representative agents trading in this market. Then in order to reach the equilibrium of supply and demand at each moment of time, it must hold that
\begin{equation}
\label{supply-demand}
s^n M = M\frac{\mu_t^n V_t}{S_t^n},
\end{equation}
where the right-hand side represents the demand of the agents for asset $n$, if we assume that  agents who use strategies different from $\mu_t$ have a vanishing share in the market in the limit $M\to\infty$. Then \eqref{supply-demand} implies \eqref{prices}.

Certainly, the above limit argument lacks mathematical rigor, but we just \emph{assume} that the model is defined by equations \eqref{wealth}--\eqref{prices}. In order to avoid the case of zero asset prices, which would make equation~\eqref{wealth} ill-defined, let us call a strategy $\mu_t$ of a representative agent \emph{admissible} if
\[
\mu_t^n >0\ \text{a.s.\ for all $t\ge0$, $n=1,\ldots,N$.}
\]
We will consider only admissible strategies $\mu_t$.

Note that if we plug $\lambda_t = \mu_t$ and $W_t=V_t$ in \eqref{wealth}, we get
\[
d V_t = \sum_{n=1}^N \left(d(\mu_t^n V_t) + \frac{X_t^n}{s^n} dt\right) 
  - \rho V_t dt 
= d V_t + \sum_{n=1}^n \frac{X_t^n}{s^n} dt - \rho V_t dt,
\]
where we used that $\sum_{n=1}^N \mu_t^n =  1$. Hence the capital of a representative agent is simply 
\[
V_t = \frac1\rho \sum_{n=1}^N \frac{X_t^n}{s^n}.
\]
In what follows, without loss of generality, let us assume that $s^n=1$ for all $n=1,\ldots,N$, so we have
\begin{equation}
\label{total-wealth}
V_t = \frac{\bar X_t}{\rho},
\end{equation}
where $\bar X_t = \sum_{n=1}^N X_t^n$.

Now, in view of \eqref{total-wealth}, equation \eqref{wealth} for the wealth of a small agent becomes
\[
d W_t = \sum_{n=1}^N \frac{\lambda_t^{n}W_t}{\mu_t^n \bar X_t} 
  (d (\mu_t^n \bar X_t) + \rho X_t^n dt) - \rho W_t dt, 
\qquad W_0=w_0>0.
\]
We will be interested in the ratio $W_t/V_t$ of the wealth of a small agent to the wealth of a representative agent. 

\begin{definition}
We say that a (admissible) strategy $\mu_t$ of a representative agent is \emph{growth optimal} if for any strategy $\lambda_t$ of a small agent
\[
\frac{W_t}{V_t}\ \text{is a supermartingale}.
\]
\end{definition}

This definition expresses the idea that a small agent achieves the best performance of his/her investment if he/she uses the same strategy as the representative agents: in other words, one cannot beat the market. This is analogous to the well-known notion of a growth optimal strategy (or a numeraire portfolio) in markets with exogenous asset prices, see, e.g., \cite{PlatenHeath06,KaratzasKardaras07}.

A simple corollary from the growth optimality of the representative agent's strategy is that a small agent cannot asymptotically outperform it with probability 1 in the sense of the following proposition.

\begin{proposition}
If the representative agents use a growth optimal strategy, than for any strategy of a small agent there exists a (finite-valued) random variable $C$ such that 
\[
\frac{W_t}{V_t} \le C\ \text{a.s.}
\]
\end{proposition}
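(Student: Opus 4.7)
The plan is to observe that $Z_t := W_t/V_t$ is a non-negative supermartingale by the definition of growth optimality, and that for such processes a uniform upper bound in $t$ is a standard consequence of Doob's maximal inequality. The claim is then obtained by setting $C := \sup_{t\ge 0} Z_t$ and showing this quantity is finite almost surely.

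Concretely, first I would note that non-negativity of $Z_t$ is immediate: $W_t > 0$ follows from the wealth equation \eqref{wealth} (it is a linear SDE driven by continuous semimartingales with positive initial condition), and $V_t = \bar X_t/\rho > 0$ by assumption~\eqref{X-assumption-1}. Moreover, since both $W_t$ and $V_t$ are continuous semimartingales and $V_t$ does not vanish, $Z_t$ has continuous sample paths, so its supremum over $t\ge 0$ equals its supremum over a countable dense set and is a measurable random variable. The supermartingale property is exactly the hypothesis that $\mu_t$ is growth optimal, which also implicitly provides $\E Z_0 = w_0/V_0\cdot\ldots$ — more precisely, $\E Z_0 < \infty$ is built into the supermartingale assumption (otherwise the supermartingale property would be vacuous at $t=0$).

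Next I would apply Doob's maximal inequality for non-negative supermartingales: for every $\lambda>0$ and every $t\ge 0$,
\[
\P\Bigl(\sup_{0\le s\le t} Z_s \ge \lambda\Bigr) \le \frac{\E Z_0}{\lambda}.
\]
Letting $t\to\infty$ by monotone convergence on the left, this gives $\P(\sup_{s\ge 0} Z_s \ge \lambda) \le \E Z_0/\lambda$ for every $\lambda>0$, hence $\sup_{s\ge 0} Z_s < \infty$ almost surely. Taking $C := \sup_{s\ge 0} Z_s$ yields a finite-valued random variable with $W_t/V_t \le C$ a.s.\ for every $t$, which is the required conclusion.

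There is no real obstacle; the only point that deserves care is the measurability and finiteness of the supremum, which is why I would make the path-continuity of $Z_t$ explicit before invoking the maximal inequality. If one prefers to avoid the maximal inequality, an alternative closing argument is that every non-negative supermartingale converges a.s.\ to an integrable limit $Z_\infty \ge 0$, and then the a.s.\ continuous path $(Z_t)_{t\ge 0}$ together with the existence of $\lim_{t\to\infty} Z_t$ forces $\sup_{t\ge 0} Z_t < \infty$ a.s.\ on a path-by-path basis; either route yields the same $C$.
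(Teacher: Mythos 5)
Your proposal is correct. The paper's own proof is a one-liner invoking the fact that a non-negative supermartingale converges a.s.\ to a finite limit, which (together with path continuity) bounds the whole trajectory --- this is precisely the ``alternative closing argument'' you sketch at the end. Your primary route instead goes through Doob's maximal inequality for non-negative supermartingales, $\P(\sup_{s\le t} Z_s \ge \lambda) \le \E Z_0/\lambda$, and passes to the limit in $t$. Both arguments are standard and equally valid here; the maximal-inequality version is slightly more self-contained on the point the paper glosses over (that a finite \emph{limit} plus continuity yields a finite \emph{supremum}), and it has the minor bonus of producing a quantitative tail bound $\P(C\ge\lambda)\le \E Z_0/\lambda = w_0\rho/(\bar X_0\lambda)$ on the dominating random variable, which the convergence-theorem route does not. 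Your attention to measurability of the supremum via path continuity is appropriate but could be compressed, since $Z_t$ is a continuous semimartingale by construction.
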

\begin{proof}
This result immediately follows from that a non-negative supermartingale has an a.s.-finite limit.
\end{proof}

This proposition implies that a market of representative agents cannot be ``invaded'' by a small agent which would eventually, as $t\to\infty$, become a not-so-small agent in the sense of holding a non-infinitesimal share of market wealth. This interpretation is analogous to the notion of an evolutionary stable strategy of \cite{MaynardSmithPrice73}.

In Theorem~\ref{th-2} below, we show that a small agent does not lose asymptotically to a representative agent, i.e.\ $\inf_{t\ge 0} W_t/V_t >0$, only if he/she uses a strategy which is in a certain sense close to $\mu_t$.

\section{Main results}
\label{sec-main}
Our first main result finds a growth optimal strategy for a representative agent in a closed form. Let $R_t^n$ denote the relative dividend intensities
\[
R_t^n = \frac{X_t^n}{\bar X_t}.
\]
Define the strategy
\begin{equation}
\label{optimal}
\mu_t^n = \int_t^\infty \rho e^{\rho (t-s)} \E(R_s^n\mid \F_t) ds.
\end{equation}
Note that $\mu_t$ is admissible in view of assumptions \eqref{X-assumption-1}--\eqref{X-assumption-2}. We are going to prove that the strategy $\mu_t$ is growth optimal, and it is a unique growth optimal strategy with respect to the measure $(\P\otimes\mathrm{Leb})$, where $\mathrm{Leb}$ stands for the Lebesgue measure on $\R_+$.  The idea how to find this strategy will be explained in Remark~\ref{rem-1} after the proof.

\begin{theorem}
\label{th-1}
The strategy $\mu_t$ defined in \eqref{optimal} is a growth optimal strategy for a representative agent. Moreover, $\mu_t$ is a $(\P\otimes\mathrm{Leb})$-unique growth optimal strategy.
\end{theorem}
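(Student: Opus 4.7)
The plan is to reduce the analysis to a single It\^o computation of $Z_t := W_t / V_t$. Using $V_t = \bar X_t / \rho$ from~\eqref{total-wealth} and $S_t^n = \mu_t^n V_t$, one expands
\[
\frac{dS_t^n + X_t^n\,dt}{S_t^n}
\;=\; \frac{d\mu_t^n}{\mu_t^n} + \frac{dV_t}{V_t} + \frac{d[\mu^n, V]_t}{\mu_t^n V_t} + \frac{\rho R_t^n}{\mu_t^n}\,dt,
\]
multiplies by $\lambda_t^n$, and sums to obtain $dW_t/W_t$. The cross-variation term appearing in $dW/W - dV/V$ is cancelled by the correction $d[V]/V^2 - d[W,V]/(WV)$ in the It\^o expansion of $d(W/V)/(W/V)$, and after invoking $\sum_n \lambda_t^n = \sum_n \mu_t^n = \sum_n R_t^n = 1$ one is left with the clean identity
\[
\frac{dZ_t}{Z_t} \;=\; \sum_{n=1}^N \frac{\lambda_t^n}{\mu_t^n}\,d\mu_t^n \;+\; \rho\Big(\sum_{n=1}^N \frac{\lambda_t^n R_t^n}{\mu_t^n} - 1\Big)\,dt.
\]

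Writing $d\mu_t^n = dA_t^n + dM_t^n$ for the canonical semimartingale decomposition, the drift of $dZ_t/Z_t$ equals $\sum_n (\lambda_t^n/\mu_t^n)\bigl[dA_t^n + \rho(R_t^n - \mu_t^n)\,dt\bigr]$. Testing with the constant strategies $\lambda = e_k$ shows that non-positivity of the drift for every admissible $\lambda_t$ forces $dA_t^n \le \rho(\mu_t^n - R_t^n)\,dt$ as signed measures for each $n$; summing over $n$ and using $\sum_n dA_t^n = 0$ (by uniqueness of the Doob--Meyer decomposition applied to $\sum_n \mu_t^n \equiv 1$) then forces equality coordinate-wise, so $\mu_t$ is growth optimal if and only if its drift satisfies $dA_t^n = \rho(\mu_t^n - R_t^n)\,dt$ for every $n$. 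Under this identity $dZ_t/Z_t = \sum_n (\lambda_t^n/\mu_t^n)\,dM_t^n$ is a local martingale, and since $Z_t > 0$, the process $Z_t$ is a non-negative local martingale, hence a supermartingale.

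It remains to solve the drift characterisation uniquely. For existence, set $N_t^n := \E\bigl(\int_0^\infty \rho e^{-\rho s} R_s^n\,ds \mid \F_t\bigr)$, a bounded martingale with a continuous version by~\eqref{cont-modification}; the identity $e^{-\rho t}\mu_t^n = N_t^n - \int_0^t \rho e^{-\rho s} R_s^n\,ds$ holds for the $\mu$ of~\eqref{optimal}, and It\^o's formula recovers the required drift. For uniqueness, given any admissible growth-optimal $\tilde\mu$, the process $K_t^n := e^{-\rho t}\tilde\mu_t^n + \int_0^t \rho e^{-\rho s} R_s^n\,ds$ satisfies $dK_t^n = e^{-\rho t}\,d\tilde M_t^n$, making it a local martingale with values in $[0, 2]$; a bounded local martingale is a true martingale, so $K_t^n = \E(K_\infty^n \mid \F_t)$. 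Since $e^{-\rho t}\tilde\mu_t^n \to 0$ as $t \to \infty$, one has $K_\infty^n = \int_0^\infty \rho e^{-\rho s} R_s^n\,ds$, and rearranging yields $\tilde\mu_t^n = \int_t^\infty \rho e^{\rho(t-s)}\E(R_s^n \mid \F_t)\,ds$ a.s.\ for each $t$; the stated $(\P\otimes\mathrm{Leb})$-uniqueness follows by Fubini.

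The step I expect to require the most care is the It\^o bookkeeping of the opening paragraph: tracking the quadratic covariation corrections in $dS^n/S^n$ and in the quotient $W/V$, and verifying that all the $d[\mu^n, V]$ cross-terms cancel exactly against one another to leave the clean identity on which the remainder of the argument rests.
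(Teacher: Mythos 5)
Your proof is correct, and its first half is essentially the paper's argument: you reduce $W_t/V_t$ to a stochastic exponential of $Z_t$ with $dZ_t=\sum_n(\lambda_t^n/\mu_t^n)\,d\mu_t^n+\rho(\sum_n\lambda_t^nR_t^n/\mu_t^n-1)\,dt$ (your It\^o bookkeeping does close correctly --- the $d\qcc{\mu^n}{V}$ and $d\qc{V}$ terms cancel exactly as you anticipate), verify via $e^{-\rho t}\mu_t^n=N_t^n-\int_0^t\rho e^{-\rho s}R_s^n\,ds$ that the candidate satisfies $d\mu_t^n=\rho(\mu_t^n-R_t^n)\,dt+e^{\rho t}\,dN_t^n$, and conclude that $Z_t$ is a local martingale, hence $W_t/V_t$ a nonnegative supermartingale. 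Where you genuinely diverge is in the two steps of the uniqueness argument, and in both cases your route is more explicit or more self-contained. First, the paper simply asserts that a nonzero drift or a singular part in $d\tilde\mu^n$ permits some $\lambda$ making $Z$ a strict submartingale; your test with the vertex strategies $\lambda=e_k$ combined with $\sum_n dA_t^n=0$ pins down $dA_t^n=\rho(\mu_t^n-R_t^n)\,dt$ (and kills any singular part) by a clean ``nonpositive measures summing to zero'' argument. Second, having the drift characterization, the paper compares $\tilde\mu$ to the explicit $\mu$ and derives a contradiction from the exponential growth of $\E(\mu_s^n-\tilde\mu_s^n\mid\F_t)$ against the bound $|\mu^n-\tilde\mu^n|\le1$; you instead show directly that $K_t^n=e^{-\rho t}\tilde\mu_t^n+\int_0^t\rho e^{-\rho s}R_s^n\,ds$ is a bounded local martingale, hence a uniformly integrable martingale closed by $K_\infty^n$, which recovers formula \eqref{optimal} without reference to the candidate --- this is in effect a rigorous version of the BSDE heuristic the paper relegates to Remark~\ref{rem-1}. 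Both uniqueness arguments rest on the same two facts (boundedness of strategies and the linear drift $\rho\mu^n$), so neither is more general, but yours has the advantage of proving existence and uniqueness of the solution to the drift characterization in one stroke, while the paper's blow-up argument is slightly shorter once the explicit solution is already in hand.
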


\begin{proof}
Without loss of generality, let us assume that $W_0=V_0=1$. By Ito's formula, one can check that the ratio process $W_t/V_t$ can be represented in the form
\[
\frac{W_t}{V_t} = \e(Z)_t,
\]
where $\e(Z)$ is the stochastic exponent of the process $Z_t$ with the stochastic differential
\begin{equation}
\label{Zt}
d Z_t = \sum_{n=1}^N \frac{\lambda_t^n}{\mu_t^n} d \mu_t^n 
  + \rho\left(\sum_{n=1}^N \frac{\lambda_t^n}{\mu_t^n}R_t^n - 1\right)dt,
\qquad Z_0 = 0.
\end{equation}
(Recall that the stochastic exponent of a continuous semimartingale $Z_t$ is the unique process $\e(Z)_t$ which solves the equation $d\e(Z)_t = \e(Z)_t d Z_t$, $\e(Z)_0 = 1$; see, e.g., \citet[Ch.~II.8]{JacodShiryaev02} for details.)

Using the explicit form of the process $\mu_t$, we find
\begin{equation}
\label{mu-sde}
d \mu_t^n = \rho(\mu_t^n - R_t^n) dt + e^{\rho t}d L_t^n, 
\end{equation}
where the processes $L_t^n$ are given by
\begin{equation}
\label{L}
L_t^n = \E\left(\int_0^\infty \rho e^{-\rho s}R_s^n ds \;\bigg|\; \F_t \right).
\end{equation}
Note that $L_t^n$ are martingales, and we can assume they are continuous in view of assumption \eqref{cont-modification}. From this, one can see that $Z_t$ can be represented in the form
\begin{equation}
\label{Z-martingale}
d Z_t = e^{\rho t}\sum_{n=1}^N \frac{\lambda_t^n}{\mu_t^n} d L_t^n.
\end{equation}
Consequently, $Z_t$ is a local martingale, and hence $W_t/V_t$ is a local martingale. Since the latter process is non-negative, it is a supermartingale. This proves that $\mu_t$ is a growth optimal strategy.

To prove the uniqueness, consider an arbitrary admissible strategy $\tilde\mu_t$ of the representative agent. Since it is a bounded continuous semimartingale, it can be represented in the form
\[
d\tilde\mu_t^n = a_t^n dt + d U_t^n + e^{\rho t}d \tilde L_t^n,
\]
where $a_t^n$, $n=1,\ldots,N$, are some locally integrable processes, $U_t^n$ are continuous processes of bounded variation which are a.s.\ singular with respect to the Lebesgue measure, and $\tilde L_t^n$ are martingales. We can assume $\tilde L_0^n=0$ and $U_0^n = 0$. Then for the corresponding process $Z_t$ from \eqref{Zt}, we have
\[
d Z_t = \sum_{n=1}^N \left(
  \frac{\lambda_t^n}{\tilde \mu_t^n}(a_t^n + \rho (R_t^n-\tilde \mu_t^n)) dt 
  + \frac{\lambda_t^n}{\tilde \mu_t^n} d U_t^n 
  + e^{\rho t}\frac{\lambda_t^n}{\tilde \mu_t^n} d \tilde L_t^n
\right).
\]
If it does not hold that $a_t^n = \rho(\tilde \mu_t^n - R_t^n)$ $(\P\otimes\mathrm{Leb})$-a.e.\ for all $n$ and $U_t^n = 0$ a.s.\ for all $n$, then it is possible to find a strategy $\lambda_t$ such that $Z_t$ is a strict submartingale (i.e.\ $\E (Z_t \mid \F_s) > Z_s$ for some $t,s$). Hence, if $\tilde\mu_t$ is a growth optimal strategy for a representative agent, it must hold that
\[
d \tilde \mu_t^n = \rho(\tilde \mu_t^n - R_t^n) dt + e^{\rho t}d \tilde L_t^n.
\]
From this equation and \eqref{mu-sde}, it follows that 
\[
d (\mu_t^n - \tilde \mu_t^n) = \rho(\mu_t^n - \tilde\mu_t^n) dt + d M_t^n,
\]
where $dM_t^n = e^{\rho t} d(L_t^n - \tilde L_t^n)$. Consequently, for all $n$ and $s\ge t$ we have
\[
\E(\mu_s^n - \tilde\mu_s^n \mid \F_t)\I(\mu_t^n > \tilde\mu_t^n) 
= e^{\rho(s-t)} (\mu_t^n - \tilde \mu_t^n)\I(\mu_t^n > \tilde\mu_t^n).
\]
The right-hand side tends to $+\infty$ as $s\to\infty$ on the set $\{\mu_t^n>\tilde\mu_t^n\}$, hence this set has zero probability because the left-hand side is bounded by 1. In the same way we prove that $\P(\mu_t^n < \tilde\mu_t^n) = 0$. Thus $\tilde\mu_t^n = \mu_t^n$ a.s.
\end{proof}

\begin{remark}
\label{rem-1}
From the proof of Theorem~\ref{th-1}, one can see the idea to find the optimal strategy $\mu_t$: it must solve stochastic differential equation \eqref{mu-sde}, in which the processes $\mu_t^n$ and $L_t^n$ are unknown, and must also satisfy the conditions $\mu_t^n\ge 0$ and $\sum_n \mu_t^n =  1$.

In order to solve this equation, let us additionally assume that the underlying probability space has the property that any continuous martingale can be represented as a stochastic integral with respect to some $d$-dimensional martingale $M_t$, i.e. $e^{\rho t}dL_t^n = \sum_{i=1}^d b_t^{ni} d M_t^i$. Then we get the following system of equations with unknown processes $\mu_t^n$ and $b_t^{in}$:
\[
d \mu_t^n = \rho(\mu_t^n - R_t^n) dt + \sum_{i=1}^db_t^{ni} d M_t^i, 
\qquad n=1,\ldots,N.
\]
This system of equations resembles a linear BSDE, except that instead of a terminal condition $\mu_T = \xi$, we have the constraints $\mu_t^n\ge 0$, $\sum_n \mu_t^n = 1$. However, if we assume that the solution $\mu_t$ exists, then for any $T>0$ we can indeed consider it as a BSDE and from the well-known formula for the solution of a linear BSDE find that
\[
\mu_t^n = e^{\rho t} \E\left( e^{-\rho T}\mu_T^n 
+ \int_t^T \rho e^{-\rho s} R_s^n ds \;\bigg|\;\F_t\right).
\]
Since $\mu_T^n \in[0,1]$, we can let $T\to\infty$ and obtain that $\mu_t^n$ is defined by \eqref{optimal}.
\end{remark}

Our second result gives a necessary and sufficient condition for a small agent to \emph{survive} in the market, in the sense that his/her wealth does not become asymptotically vanishing compared to the wealth of a representative agent. One can interpret this condition as that the strategy $\lambda_t$ must be close to $\mu_t$. Its meaning will become especially transparent in a particular example considered in the next section.

\begin{theorem}
\label{th-2}
Suppose that the representative agents use the growth optimal strategy $\mu_t$ and an individual agent uses some strategy $\lambda_t$. Define the process
\[
G_t = \sum_{i,j=1}^N \int_0^t e^{2\rho s} 
\frac{\lambda_s^i\lambda_s^j}{\mu_s^i\mu_s^j} d \qcc{L^i}{L^j}_s,
\]
where $L^i_t$ are the continuous martingales defined in \eqref{L} and the angle brackets denote the quadratic covariation. 

Then $G_t$ is non-decreasing and 
\begin{align}
\label{ratio-limit-1}
&\lim_{t\to\infty} \frac{W_t}{V_t} = 0
  \ \text{a.s.\ on the set}\ \{G_\infty = \infty\},\\
\label{ratio-limit-2}
&\lim_{t\to\infty} \frac{W_t}{V_t} > 0
  \ \text{a.s.\ on the set}\ \{G_\infty < \infty\}.
\end{align}
\end{theorem}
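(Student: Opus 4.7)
The plan is to interpret $G_t$ as the quadratic variation $\qc{Z}_t$ of the continuous local martingale $Z_t$ that appeared in the proof of Theorem~\ref{th-1}, and then reduce the theorem to classical asymptotics of the stochastic exponential of a continuous local martingale.

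First I would recall from \eqref{Z-martingale} that $W_t/V_t = \e(Z)_t$ with $dZ_t = e^{\rho t}\sum_n \tfrac{\lambda_t^n}{\mu_t^n}\, dL_t^n$. A direct computation of $\qc{Z}$ from this SDE gives, by bilinearity of the quadratic covariation,
\[
\qc{Z}_t = \sum_{i,j=1}^N \int_0^t e^{2\rho s}
\frac{\lambda_s^i\lambda_s^j}{\mu_s^i\mu_s^j}\, d\qcc{L^i}{L^j}_s = G_t.
\]
The monotonicity of $G_t$ is then immediate, since the quadratic variation of any continuous semimartingale is non-decreasing in $t$.

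Next, since $Z$ is a continuous local martingale with $Z_0=0$, I would use the Dambis--Dubins--Schwarz theorem to write $Z_t = B_{\qc{Z}_t}$ for some Brownian motion $B$ (possibly on an enlargement of the probability space). On the event $\{G_\infty<\infty\}$, the time change $\qc{Z}_t$ converges to a finite limit, so $Z_t\to B_{G_\infty}$ a.s., a finite random variable. Using $\e(Z)_t = \exp(Z_t - \tfrac12\qc{Z}_t)$, we obtain that $W_t/V_t$ converges a.s.\ to $\exp(B_{G_\infty}-\tfrac12 G_\infty)>0$, which gives \eqref{ratio-limit-2}. On the event $\{G_\infty=\infty\}$, the strong law of large numbers for Brownian motion ($B_s/s\to 0$ a.s.) yields $Z_t/\qc{Z}_t\to 0$ a.s., so
\[
\log\e(Z)_t = \qc{Z}_t\left(\frac{Z_t}{\qc{Z}_t} - \frac12\right) \to -\infty,
\]
which gives \eqref{ratio-limit-1}.

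The one step that needs mild care is the measurability/time-change argument on the set $\{G_\infty=\infty\}$ when the DDS Brownian motion lives on an enlargement: one should either argue directly using the classical dichotomy for continuous local martingales (on $\{\qc{Z}_\infty<\infty\}$ the process $Z_t$ converges a.s.; on $\{\qc{Z}_\infty=\infty\}$ one has $Z_t/\qc{Z}_t\to 0$ a.s., see, e.g., \citet[Ch.~V]{JacodShiryaev02}), or apply DDS and then project the statements back to the original filtration. Apart from this bookkeeping, the argument is essentially a direct application of standard results, once the identification $G_t=\qc{Z}_t$ is made.
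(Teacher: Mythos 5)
Your proposal is correct and follows essentially the same route as the paper: identify $G_t$ with $\qc{Z}_t$ for the local martingale $Z$ from the proof of Theorem~\ref{th-1}, write $\ln(W_t/V_t)=Z_t-\tfrac12\qc{Z}_t$, and conclude via the standard dichotomy for continuous local martingales (convergence on $\{\qc{Z}_\infty<\infty\}$, $Z_t/\qc{Z}_t\to0$ on $\{\qc{Z}_\infty=\infty\}$). The paper invokes the martingale SLLN and convergence theorem directly rather than passing through the Dambis--Dubins--Schwarz time change, but this is only a cosmetic difference.
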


\begin{proof}
By Ito's formula, one can find that
\[
\ln\frac{W_t}{V_t} = \ln\frac{W_0}{V_0} + Z_t - \frac12 \qc Z_t,
\]
where $Z_t$ is the local martingale from the proof of Theorem~\ref{th-1}. From~\eqref{Z-martingale}, we see that $\qc Z_t = G_t$. 
Then on the set $\{G_\infty = \infty\}$, we have $\lim_{t\to\infty} M_t /G_t =0$  by the SLLN for martingales, which proves \eqref{ratio-limit-1}. On the set $\{G_\infty < \infty\}$, there exists the finite limit $M_\infty=  \lim_{t\to\infty} M_t$, which proves \eqref{ratio-limit-2}. (Regarding the SLLN and convergence of martingales, see, e.g., \citet[Ch.~2.6]{LiptserShiryaev89}.)
\end{proof}

\section{Example}
\label{sec-example}
The following example illustrates the main results of the paper. Let the filtration $\FF$ be generated by a $K$-dimensional standard Brownian motion $B_t=(B_t^1,\ldots,B_t^K)$ and the relative dividend intensity process $R_t$ be a martingale of the form
\begin{equation}
\label{example-R}
d R_t = \sigma_t d B_t,
\end{equation}
where $\sigma_t=(\sigma_t^{nk}) \in \R^{N\times K}$ is a matrix-valued process integrable with respect to $B_t$. A simple particular model of the dividend intensity process $X_t$ that leads to equation \eqref{example-R}, in the case $K=1$, $N=2$ is given by
\begin{align}
\label{example-1d-1}
&d X_t^1 = \sigma X_t^1(1-X_t^1) d B_t, \qquad X_0^1 = x_0^1 \in (0,1),\\
\label{example-1d-2}
&X_t^2 = 1- X_t^1,
\end{align}
where $\sigma>0$ is a parameter. It is easy to see that $X_t^1$ always stays in the interval $(0,1)$, hence the model is well-defined. Obviously, in this case $R_t^i = X_t^i$, $i=1,2$.

As follows from Theorem~\ref{th-1}, in model \eqref{example-R}, the optimal strategy prescribes to allocate the investment budget among the assets  proportionally to the relative dividend intensities:
\[
\mu_t^n = R_t^n.
\]
From \eqref{mu-sde}, we have $dL_t^i = e^{-\rho t} d\mu_t^i$. Define $\nu_t = (\nu^1_t,\ldots,\nu^N_t)$ with $\nu_t^n = \lambda_t^n/ R_t^n$. Then the process $G_t$ from Theorem~\ref{th-2} becomes
\[
G_t = \int_0^t \|\nu_s\sigma_s\|^2 ds.
\]
Consequently, by Theorem~\ref{th-2}, we have 
\[
\lim_{t\to\infty} \frac{W_t}{V_t} = 0
\ \text{a.s.\ if and only if}\ \int_0^\infty \|\nu_t \sigma_t\|^2 dt =\infty.
\]
For the model defined by \eqref{example-1d-1}--\eqref{example-1d-2}, it is easy to see that
\[
G_t = \sigma^2\int_0^t (\lambda_s^1 - R_s^1)^2 ds.
\]
Therefore, the convergence $W_t/V_t\to0$ takes place if and only if $\int_0^\infty (\lambda_t^1 - \mu_t^1)^2 dt = \infty$.

\small 
\setlength{\bibsep}{0.2em plus 0.3em}
\bibliographystyle{apalike}
\bibliography{growth-optimal}

\end{document}